\documentclass[journal]{IEEEtran}

\ifCLASSINFOpdf
\else
   \usepackage[dvips]{graphicx}
\fi
\usepackage{url}

\hyphenation{op-tical net-works semi-conduc-tor}

\usepackage{graphicx}

\usepackage{bm}
\usepackage{cite}
\usepackage{xcolor}
\usepackage{pifont}
\usepackage{amsthm}
\usepackage{comment}
\usepackage{amsmath}
\usepackage{amssymb}
\usepackage{balance}
\usepackage{amsfonts}
\usepackage{arydshln}
\usepackage{graphicx}
\usepackage{setspace}
\usepackage{balance}
\usepackage{algorithm}
\usepackage{algorithmic}
\usepackage{caption}

\newcounter{MYtempeqncnt}
\newtheorem{lemma}{Lemma}
\newtheorem*{lemma*}{Lemma}

\newtheorem{theorem}{Theorem}
\newtheorem*{theorem*}{Theorem}

\newtheorem{corollary}{Corollary}

\begin{document}

\title{Outage Probability of Intelligent Reflecting Surfaces Assisted Full Duplex Two-way Communications}

\author{Binyu~Lu,~\IEEEmembership{Student~Member,~IEEE,}
	Rui~Wang,~\IEEEmembership{Senior Member,~IEEE,}
	Yiming~Liu,~\IEEEmembership{Student~Member,~IEEE}
\thanks{B. Lu, R. Wang and Y. Liu are with the College
	of Electrical and Information Engineering, Tongji University, Shanghai, 201804, 
	China, E-mail: tjabinl@163.com, ruiwang@tongji.edu.cn, ymliu\_970131@tongji.edu.cn }
}

\maketitle

\begin{abstract}
In this letter, we study the outage probability of intelligent reflecting surface (IRS) assisted full duplex two-way communication systems, which characterizes the performance of overcoming the transmitted data loss caused by long deep fades. To this end, we first derive the probability distribution of the cascaded end-to-end equivalent channel with an arbitrarily given IRS beamformer. Our analysis shows that deriving such probability distribution in the considered case is more challenging than the case with the phase-matched IRS beamformer. 
Then, with the derived probability distribution of the equivalent channel, we obtain the closed-form expression of the outage probability performance. It theoretically shows that the reflecting element number has a conspicuous effect on the improvement of the system reliability. 
Extensive numerical results verify the correctness of the derived results and confirm the superiority of the considered IRS assisted two-way communication system comparing to the one-way counterpart. 
\end{abstract}

\begin{IEEEkeywords}
Full duplex, two-way, intelligent reflecting surface, performance analysis, outage probability.
\end{IEEEkeywords}

\IEEEpeerreviewmaketitle

\section{Introduction}

\IEEEPARstart{I}{ntelligent} reflecting surface (IRS)
has recently emerged as a promising technology to reconstruct high-quality channel links in the sixth-generation wireless networks \cite{wu2019towards}.
On the one hand, based on the massive reflecting elements, whose phase shifts can be adjusted in a desired direction independently, IRS can flexibly control the propagation environment of radio signals \cite{di2019smart}.
On the other hand, full duplex two-way communication systems, in which users can transmit and receive messages simultaneously over the same channel, can significantly improve spectral efficiency \cite{sabharwal2014band}. 
Since no transmit power consumption of IRS, the case of IRS includes nonexistence of rate loss in full-duplex relaying \cite{9025235}. Therefore, introducing IRS to two-way communications can enhance reliability and availability of wireless networks.

The system performance of IRS-assisted communications has been investigated in a lot of literature in terms of energy efficiency\cite{9322510,8741198}, spectral efficiency \cite{jung2021performance}, outage probability and asymptotic distribution of the sum-rate \cite{jung2019reliability}. 
However, the system performance of IRS-assisted two-way communications has not been studied thoroughly, and very limited number of results are available so far. 
In \cite{atapattu2020reconfigurable}, the outage probability and spectral efficiency of two-way communications assisted by IRS are investigated in reciprocal and non-reciprocal channels. 
In \cite{9324795}, fluctuating two-ray distribution in mmWave frequency is used to derive the outage probability and the average bit error probability for IRS aided systems.

Almost all of the aforementioned works are conducted with the assumption that all reflecting elements at IRS are able to reflect incident signals with the \textit{same constant amplitude} and the IRS  beamformer are designed to be \textit{optimally phase-matched} according to the channel state information (CSI).
However, \textit{the amplitude and phase of IRS elements can both be controlled independently} in practice via controlling over the resistance and capacitance of the integrated circuits in the IRS element, respectively \cite{guo2019weighted}.
Besides, as a complete passive equipment, IRS can not design the reflecting beamformer itself, appointing another node in the network to design the IRS reflecting beamfomer with collected CSI and then deliver the designed IRS reflecting beamformer to IRS may cause \textit{expensive signal overhead}, which makes such IRS beamforming design paradigm not feasible. Therefore, it is valuable to considered an IRS assisted communication system with given reflecting IRS beamformers, which are stored in the IRS node and obtained based on certain experiences, such as random phase rotation scheme \cite{9384311}.

In this letter, we focus on studying the outage probability of IRS-assisted full duplex two-way communication systems for any given reflecting coefficients of IRS. 
It is worth noting that deriving the outage probability with an given IRS beamformer, especially in the case with different amplitudes at different reflecting elements, is much more challenging than the case with an optimally designed phase-matched IRS beamformer.
To achieve this goal, we first transform the problem of deriving exact signal-to-noise ratio (SNR) expressions into a task of obtaining the distribution of inner product of two independent complex Gaussian random vectors.
Then, we develop the probability density distribution (PDF) and cumulative distribution function (CDF) of the inner product by applying the analytical framework that investigates the joint characteristic function of the real and imaginary parts of a complex variable.
Finally, closed-form outage probability expressions are obtained for IRS assisted two-way systems.
The correctness of the derived results are testified by numerical results which also illustrate the superiority of the considered IRS assisted two way systems comparing to the one way counterpart, where residual loop interference is considered.

\section{System Model}

Consider a two-way wireless communication system with two mobile users (namely, $U_1$ and $U_2$) and a single IRS as shown in Fig. 1. 
There is no direct link between mobile users due to serious fading and shadowing. 
The two users exchange messages by IRS with full-duplex communication strategy, which indicates two nodes transmitting and receiving information symbols simultaneously. 
Each user has two antennas which are responsible for signal transmission and reception respectively.
The pair of antennas are implemented far apart each other or constituted by non-reciprocal hardware.
Therefore, we assume that the forward and backward channels are non-reciprocal.
\begin{figure}[htbp]
	\vspace{-8 pt}
	\centerline{\includegraphics[width=7.2 cm]{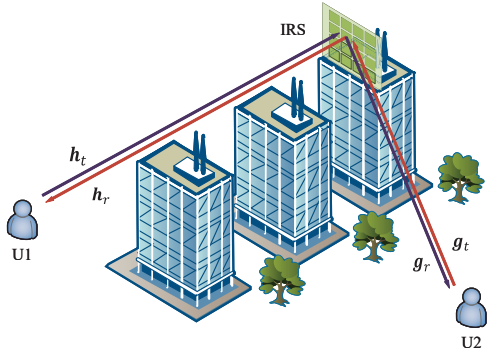}}
	\caption*{Fig. 1. The considered full duplex Two-way communication system assisted by an IRS with $N$ reflecting elements.}
	\label{fig1}
	\vspace{-8 pt}
\end{figure}

We assume that each reflecting element can continuously control both the amplitude and phase of the reflected signal. 
Most importantly, the phase shifts at IRS are arbitrarily given based on certain experiences, not just the simple case with phase-matched beamformer. 
They are denoted by a diagonal matrix $\boldsymbol{\Theta} = \operatorname{diag} \{\theta_1,...,\theta_N\} $ where $ \theta_i=|\theta_i| e^{j \phi_i }$, the amplitude $|\theta_i| \in \left[ 0,1\right]$, the phase shift $\phi_i \in \left[ 0,2 \pi \right) $, and $N$ denotes the number of reflecting elements. 
A lot of prior works have been conducted with the assumption that the amplitudes of all reflecting elements are \textit{constant}, which can only be seen as a special case of our letter. 

The channel coefficients of $U_1$-IRS link and $U_2$-IRS link are denoted as $\boldsymbol{h}_t=\left[ h_{t1},...,h_{tN} \right]^T$ and $\boldsymbol{g}_t=\left[ g_{t1},...,g_{tN} \right]^T$, respectively. 
Accordingly, the channels from IRS to $U_1$ and $U_2$ are $\boldsymbol{h}_r=\left[ h_{r1},...,h_{rN} \right]^T$ and $\boldsymbol{g}_r=\left[ g_{r1},...,g_{rN} \right]^T$, respectively. 
All the channels are assumed to be independent and identically distributed (i.i.d.) complex Gaussian fading, i.e., $\boldsymbol{h_t} \sim \mathcal{CN} (\boldsymbol{0},\sigma_{ht}^2 \boldsymbol{I}_N) $,
$\boldsymbol{h_r} \sim \mathcal{CN} (\boldsymbol{0},\sigma_{hr}^2 \boldsymbol{I}_N) $,
$\boldsymbol{g_t} \sim \mathcal{CN} (\boldsymbol{0},\sigma_{gt}^2 \boldsymbol{I}_N) $, $\boldsymbol{g_r} \sim \mathcal{CN} (\boldsymbol{0},\sigma_{gr}^2 \boldsymbol{I}_N) $. 
Furthermore, we denote the loop channels between the transmitting and receiving antenna of each user by $\tilde{\boldsymbol{h}}$ and $\tilde{\boldsymbol{g}}$, respectively.

At each time slot, $U_1$ and $U_2$ transmit their data to IRS, and IRS reflects the
received signal to $U_1$ and $U_2$. Therefore, the signal received by $U_1$ and $U_2$ are given by
\begin{subequations}
\begin{align}
y_1 {} & =  \underbrace{ \sqrt{P_{2}}\boldsymbol{h}_r^T \boldsymbol{\Theta g}_t s_{2} }_{\text{Desired-signal}} + \underbrace{ \sqrt{P_{1}}\boldsymbol{h}_r^T \boldsymbol{\Theta h}_t s_1 }_{\text{Self-interference}}  + \underbrace{ \sqrt{P_{1}} \tilde{\boldsymbol{h}} s_1 }_{\text{Loop-interference}} +n_1 , \\
y_2 {} & =  \underbrace{ \sqrt{P_{1}}\boldsymbol{g}_r^T \boldsymbol{\Theta h}_t s_{1} }_{\text{Desired-signal}} + \underbrace{ \sqrt{P_{2}}\boldsymbol{g}_r^T \boldsymbol{\Theta g}_t s_2 }_{\text{Self-interference}} + \underbrace{ \sqrt{P_{2}} \tilde{\boldsymbol{g}} s_2}_{\text{Loop-interference}} + n_2 .
\end{align}
\end{subequations}
where $P_{i},$ $i\in \{1,2\}  $ is transmitting power of $U_i$ and $s_i$ is the symbol that $U_i$ wants to transmit to the other user. 
Additive white Gaussian noise $n_1$ and $n_2$ are subject to $n_1 \sim \mathcal{CN} (0,\sigma_1^2 ) $, $n_2 \sim \mathcal{CN} (0,\sigma_2^2 ) $, respectively. 

Since the users have the global CSI, they can completely eliminate the self-interference.
To avoid loop interference, $U_1$ and $U_2$ apply some sophisticated loop interference cancellations, which results in residual interference \cite{atapattu2020reconfigurable}.
After interference cancellation, $U_1$ and $U_2$ receives
\begin{subequations}
\begin{align} \label{receivedsignal}
y_1^\prime = \sqrt{P_{2}}\boldsymbol{h}_r^T \boldsymbol{\Theta g}_t s_{2} + m_{1} + n_1 ,  \\
y_2^\prime = \sqrt{P_{1}}\boldsymbol{g}_r^T \boldsymbol{\Theta h}_t s_{1} + m_{2} + n_2 ,
\end{align}
\end{subequations}
where $m_{i},i\in \{1,2\}$ is the received residual loop interference of $U_i$ resulting from several stages of cancellation. We adopt the model where $m_{i}$ is subject to Gaussian distribution with zero-mean and $\sigma_{m_{i}}^2$ variance \cite{atapattu2020reconfigurable}. 
Further, the variance is characterized as $\sigma_{m_{i}}^2 = q_{i} P_i^{v_i}$ for $P_i\geq1$, where the two constrains, $q_{i}>0$ and $v_i\in \left[0,1 \right] $, depend on the cancellation method applied by users. 
Therefore, SNRs are expressed as 
\begin{subequations}
\begin{align} \label{SINR}
\gamma_1 = \frac{|\boldsymbol{h}_r^T \boldsymbol{\Theta g}_t|^2 P_{2}}{\sigma_1^2 + \sigma_{m_{1}}^2}  ,\\
\gamma_2 = \frac{|\boldsymbol{g}_r^T \boldsymbol{\Theta h}_t|^2 P_{1}}{\sigma_2^2 +\sigma_{m_{2}}^2} .
\end{align}
\end{subequations}

We define that the system outage occurs when at least one of the two users outages, i.e., $\gamma_1$ or $\gamma_2$ drops below their acceptable SNR threshold $\gamma_{t1}=2^{R_1}-1$ or $\gamma_{t2}=2^{R_2}-1$, respectively, where $R_1$ and $R_2$ are target rates of $U_1$ and $U_2$. Then the outage probability can be written as
\begin{equation}\label{TW_OP_defination}
P_{\text{out}} = Pr\{ \left\lbrace \gamma_1 <\gamma_{t1} \right\rbrace \cup \left\lbrace \gamma_2 <\gamma_{t2} \right\rbrace \}  .
\end{equation}

\section{Analysis of Outage probability}

In this section, we focus on outage probability of the system.
Firstly, we give the lemma as follows to obtain PDF of module of cascade channel $\boldsymbol{h}_r^T \boldsymbol{\Theta g}_t$ in (\ref{SINR}) with given $\boldsymbol{\Theta}$ as an example. 
Due to the equality of users, we assume that $\sigma_{ht}^2 = \sigma_{hr}^2=\sigma_{h}^2$, $\sigma_{gt}^2 = \sigma_{gr}^2=\sigma_{g}^2$ and $U_1$ and $U_2$ have the same target rate $R$.

\begin{lemma} \label{lemma1content}
Probability density distribution (PDF) of cascade channel $ z = \boldsymbol{h}_r^T \boldsymbol{\Theta g}_t  $ in IRS assisted two-way systems with given phase shift matrix $ \boldsymbol{\Theta} $ can be obtained as follows:
\begin{align}\label{lemma1}
p_z(r_z) = {} & r_z \sum_{i=1}^{N} C_i K_0 \left( \frac{2 r_z}{\sigma_{h} \sigma_{g} |\theta_i|} \right)\text{,}
\end{align}
where $r_z = |z|$, $K_v(\text{·})$ is the second kind modified Bessel function with the order v, 
$ C_i = \frac{a_i^{(N-2)}}{\prod_{j=1,j\neq i}^{N} (a_i - a_j)} $, $ a_i = \frac{1}{4} \sigma_{g}^2 \sigma_{h}^2 |\theta_i|^2, i\in \{1,...,N\} $.
\end{lemma}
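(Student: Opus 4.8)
The plan is to exploit the conditional Gaussianity of $z$ given one of the two channel vectors, which reduces the problem to finding the distribution of a weighted sum of exponential random variables and then to a single integral against a conditional Rayleigh density; this is an alternative to, and consistent with, the characteristic-function route mentioned in the introduction.

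First I would remove the phases: since each $g_{ti}$ is circularly symmetric, $\theta_i g_{ti}=|\theta_i|e^{j\phi_i}g_{ti}$ has the same law as $|\theta_i|g_{ti}$, so $z=\boldsymbol h_r^T\boldsymbol\Theta\boldsymbol g_t$ is equal in distribution to $\sum_{i=1}^N|\theta_i|h_{ri}g_{ti}$ and the $\phi_i$ play no role in $p_z$. Conditioning on $\boldsymbol g_t$, this is a linear combination of the independent $h_{ri}\sim\mathcal{CN}(0,\sigma_h^2)$, hence $z\mid\boldsymbol g_t\sim\mathcal{CN}(0,W)$ with $W=\sigma_h^2\sum_{i=1}^N|\theta_i|^2|g_{ti}|^2$. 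Therefore $r_z=|z|$ is conditionally Rayleigh, with $p_{r_z\mid W}(r_z\mid w)=\frac{2r_z}{w}e^{-r_z^2/w}$.

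Next I would determine the law of $W$. Each $|g_{ti}|^2$ is exponential with mean $\sigma_g^2$, so $V_i:=\sigma_h^2|\theta_i|^2|g_{ti}|^2$ is exponential with mean $\sigma_h^2\sigma_g^2|\theta_i|^2=4a_i$, and the $V_i$ are independent. Assuming the $a_i$ (equivalently the nonzero $|\theta_i|$) are pairwise distinct, a partial-fraction expansion of the product of Laplace transforms $\prod_{i=1}^N(1+4a_i s)^{-1}$ yields the hypoexponential density $p_W(w)=\sum_{i=1}^N\frac{C_i}{4}e^{-w/(4a_i)}$, with $C_i=\frac{a_i^{N-2}}{\prod_{j\neq i}(a_i-a_j)}$ exactly as stated. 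This is precisely the step that uses the \emph{distinct amplitudes} and that collapses to a simpler Erlang/gamma form in the phase-matched constant-amplitude case, which is why the present setting is harder.

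Finally, I would integrate out $W$: $p_z(r_z)=\int_0^\infty p_{r_z\mid W}(r_z\mid w)\,p_W(w)\,dw=\sum_{i=1}^N\frac{C_i}{4}\int_0^\infty\frac{2r_z}{w}e^{-r_z^2/w-w/(4a_i)}\,dw$. Invoking the standard identity $\int_0^\infty x^{-1}e^{-p/x-qx}\,dx=2K_0(2\sqrt{pq})$ with $p=r_z^2$ and $q=1/(4a_i)$, and using $\sqrt{a_i}=\tfrac12\sigma_h\sigma_g|\theta_i|$, each integral equals $2K_0\!\left(\frac{2r_z}{\sigma_h\sigma_g|\theta_i|}\right)$; collecting the constants gives $p_z(r_z)=r_z\sum_{i=1}^N C_iK_0\!\left(\frac{2r_z}{\sigma_h\sigma_g|\theta_i|}\right)$. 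I expect the main obstacle to be bookkeeping rather than conceptual: pinning down the partial-fraction coefficients in the compact $C_i$ form, matching the constants in the Bessel integral, and commenting on the distinct-$a_i$ hypothesis (repeated values would instead generate higher-order poles and derivative/$K_1$-type terms).
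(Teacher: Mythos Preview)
Your argument is correct and self-contained; the constants and the $K_0$ identity are all tracked properly, and the partial-fraction coefficient indeed reduces to $C_i=a_i^{N-2}/\prod_{j\neq i}(a_i-a_j)$. The paper, however, takes a genuinely different route to the same Bessel sum. It also conditions on one channel vector (namely $\boldsymbol t=\boldsymbol\Theta^T\boldsymbol h_r$), but instead of writing the conditional Rayleigh density and mixing over the hypoexponential law of $W$, it computes the \emph{joint characteristic function} of $(\Re z,\Im z)$, obtaining $\Psi(j\omega_1,j\omega_2)=\prod_{i=1}^N\bigl(1+a_i(\omega_1^2+\omega_2^2)\bigr)^{-1}$, then inverts in polar coordinates to reach the Hankel-type representation $p_z(r_z)=r_z\int_0^\infty z\,\prod_i(1+a_iz^2)^{-1}J_0(r_zz)\,dz$; only at this stage are partial fractions applied, and the identity $\int_0^\infty \frac{z}{z^2+1/a_i}J_0(r_zz)\,dz=K_0(r_z/\sqrt{a_i})$ yields the result. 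Your density-domain approach is more elementary and shorter: it replaces the two-dimensional Fourier inversion and the $J_0$ integral by a single one-dimensional integral $\int_0^\infty w^{-1}e^{-p/w-qw}\,dw=2K_0(2\sqrt{pq})$, and it makes the role of the distinct-$|\theta_i|$ assumption transparent at the level of the hypoexponential density. The characteristic-function route, on the other hand, exposes the radial symmetry of $z$ explicitly and immediately delivers the constant-amplitude special case (the product collapses to a single $N$th power, giving the $K_{N-1}$ formula) without a separate limiting argument.
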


\begin{proof}
	
To efficient present the proof, we construct the following new random variable
\begin{align}
	\boldsymbol{t}^{T}
	{} & = \boldsymbol{h}_r^T \boldsymbol{\Theta}\text{,}
\end{align}
where $t_i=h_{ri}\theta_i, i \in \left\{ 1,2,...,N \right\}$.
Since $\boldsymbol{h}$ satisfy the complex Gaussian distribution $\mathcal{CN} (\boldsymbol{0},\sigma_{h}^2 \boldsymbol{I}_N) $,
we obtain the distribution of $\boldsymbol{t}$ according to \cite{mallik2011distribution}, i.e., $\boldsymbol{t} \sim CN(\boldsymbol{0},\sigma_{h}^2 \tilde{\boldsymbol{\Theta})}$,
where $ \tilde{\boldsymbol{\Theta}} = diag[|\theta_1|^2,...,|\theta_N|^2] $.
Then we have
\begin{equation} \label{HGproduct}
	z = \boldsymbol{t}^T \boldsymbol{g}_t = \sum_{i=1}^{N} { h_{ri}\theta_i g_{ti} } \triangleq z_R + j z_I\text{,}
\end{equation}
where $ z_R  $ and $ z_I $ are the real and imaginary parts of $z$, respectively. According to (\ref{HGproduct}), it is clear that $z_R$ and $z_I$, conditioned on $\boldsymbol{t}$, are independent and their conditional distributions can be expressed as
\begin{subequations}
\begin{align} \label{ZRI}
	z_R|_{\boldsymbol{t}} \sim \mathcal{N}(\boldsymbol{0},\frac{\sigma_{g}^2}{2} \Vert \boldsymbol{t} \Vert^2)  \text{,}   \\
	z_I|_{\boldsymbol{t}} \sim \mathcal{N}(\boldsymbol{0},\frac{\sigma_{g}^2}{2} \Vert \boldsymbol{t} \Vert^2)  .
\end{align}
\end{subequations}
The conditional joint characteristic function of the conditioned $ z_R  $ and $ z_I $ \cite{mallik2011distribution} is expressed as
\begin{align} \label{conditionpdf}
	\Psi_{z_R,z_I | \boldsymbol{t}} (j\omega_1,j\omega_2 |\boldsymbol{t} ) 
	& = E\{\text{exp}\left[j(\omega_1 z_1 + \omega_2 z_2)| \boldsymbol{t}=\boldsymbol{t}_0\right]\}  \notag \\
	& = \text{exp}\left[-\frac{1}{4}(\omega_1^2 + \omega_2^2) \sigma_{g}^2 \parallel \boldsymbol{t}_0 \parallel^2 \right]  .
\end{align}
Since each element $t_i$ of random vector $\boldsymbol{t}$ is subject to complex Gaussian distribution, probability density function of $\boldsymbol{t}$ is
\begin{equation} \label{tpdf}
	f(\boldsymbol{t}) = \frac{1}{\pi^N \sigma_{hr}^{2N} \prod_{i=1}^{N}|\theta_i|^2} \text{exp}\left(-\sum_{i=1}^{N} \frac{|t_i|^2}{\sigma_{h}^2 |\theta_i|^2}\right).
\end{equation}
Then, combining (\ref{conditionpdf}) and (\ref{tpdf}) , the joint characteristic function of $z_R$ and $z_I$ is given by
\begin{align} \label{joint function}
	&\Psi_{z_R,z_I} (j\omega_1,j\omega_2) 
	= \int_{\boldsymbol{t} \in \boldsymbol{C}^N} \Psi_{z_R,z_I | \boldsymbol{t}} (j\omega_1,j\omega_2 | \boldsymbol{t}) f(\boldsymbol{t}) d \boldsymbol{t}   \notag \\
	& \stackrel{a}= \int_{\boldsymbol{t} \in \boldsymbol{C}^N} \frac{\text{exp}\left(-\sum_{i=1}^{N} \frac{|t_i|^2}{\sigma_{h}^2 |\theta_i|^2} -\frac{1}{4}(\omega_1^2 + \omega_2^2) \sigma_{g}^2 \parallel \boldsymbol{t} \parallel^2 \right)  }{\pi^N \sigma_{h}^{2N} \prod_{i=1}^{N}|\theta_i|^2}  d \boldsymbol{t}  \notag  \\
	&\stackrel{b}= \frac{\prod_{i=1}^{N} \int_{t \in C^N} \text{exp}\bigg\{ - \left[ \frac{1}{\sigma_{h}^2 |\theta_i|^2} + \frac{1}{4} (\omega_1^2 + \omega_2^2) \sigma_{g}^2 \right] |t_i|^2 \bigg\} dt_i}{\pi^N \sigma_{h}^{2N} \prod_{i=1}^{N}|\theta_i|^2}  \notag \\
	& \stackrel{c}= \frac{\prod_{i=1}^{N} \frac{\pi \sigma_{h}^2 |\theta_i|^2}{1 + \frac{1}{4} (\omega_1^2 + \omega_2^2) \sigma_{g}^2 \sigma_{h}^2 |\theta_i|^2}}{\pi^N \sigma_{h}^{2N} \prod_{i=1}^{N}|\theta_i|^2}  \notag \\
	&\stackrel{d}= \prod_{i=1}^{N} \left( 1 + \frac{1}{4}(\omega_1^2 + \omega_2^2) \sigma_{g}^2 \sigma_{h}^2 |\theta_i|^2 \right) ^{-1}  \text{,}
\end{align}
where (\ref{joint function}c) follows from  $\int_{-\infty}^{+\infty} \text{exp}(-ax^2) dx = \sqrt{\frac{\pi}{a}}, x\in\mathbb{R} $.

Supposing that PDF of $z$ is $p_z(x,y)$ where $x$ and $y$ are its real and imagine parts, after converting the joint characteristic function from Cartesian coordinates form to the Polar coordinate form $p_z(r_z,\beta_z)$, we have
\begin{align} \label{polar}
	&p_z(r_z) = \int_{0}^{2\pi} p_z(r_z,\beta_z) \,\mathrm{d} \beta_z \notag \\
	&= \frac{1}{4\pi^2} \int_{\beta_z=0}^{2\pi} r_z \int_{\omega_1 = -\infty}^{+ \infty} \int_{\omega_2 = -\infty}^{+ \infty} \Psi_{z_R,z_I} (j\omega_1,j\omega_2) \notag\\
	& \text{exp}\left[ -jr_z(\omega_1 \cos\beta_z + \omega_2 \sin\beta_z) \right] \,\mathrm{d} \omega_1 \,\mathrm{d} \omega_2 \,\mathrm{d} \beta_z  .
\end{align}
Substituting (\ref{joint function}) into (\ref{polar}), probability density distribution of cascade channel $ z $ is obtained in (\ref{pdf}) as shown on the top of the next page, where $I_\alpha(\text{·})$ is the first kind modified Bessel function and $J_\alpha(\text{·})$ is the first kind Bessel function of with the order $\alpha$. 
The equality (a) of (\ref{pdf}) is according to (3.339) in \cite{jeffrey2007table}, and equality (b) follows from (8.406) in \cite{jeffrey2007table}. 
\begin{figure*}[!t]
	\normalsize
	\setcounter{MYtempeqncnt}{\value{equation}}
	\vspace{-15pt}
	\begin{align}
	\label{pdf}
		p_z (r_z) = {} & \frac{r_z}{4\pi^2} \int_{0}^{\infty} z\left( \prod_{i=1}^{N} \frac{1}{1 + \frac{z^2}{4} \sigma_g^2 \sigma_h^2 |\theta_i|^2} \right) \int_{0}^{2\pi} \int_{0}^{2\pi} \text{exp}\left[ -j r_z z \cos(\beta_z - \phi)  \right] \,\mathrm{d} \beta_z \,\mathrm{d} \phi \,\mathrm{d} z  \notag \\
\stackrel{a} = {} &\frac{r_z}{4\pi^2} \int_{0}^{\infty} z\left( \prod_{i=1}^{N} \frac{1}{1 + \frac{z^2}{4} \sigma_g^2 \sigma_h^2 |\theta_i|^2} \right) \int_{0}^{2\pi} 2\pi I_0(-jr_zz)  \,\mathrm{d} \phi \,\mathrm{d} z  \notag \\
\stackrel{b}= {} & r_z \int_{0}^{\infty} z\left( \prod_{i=1}^{N} \frac{1}{1 + \frac{z^2}{4} \sigma_g^2 \sigma_h^2 |\theta_i|^2} \right) J_0(r_zz)  \,\mathrm{d} z
	\end{align}
	\setcounter{figure}{\value{MYtempeqncnt}}
	\hrulefill
	\vspace{-10pt}
\end{figure*}

Since it is arduous to compute the integral in (\ref{joint function}), we convert it from the product of a series of fractions to the sum of fractions as shown in
\begin{align} \label{prod_sum}
\prod_{i=1}^{N} \frac{1}{1+z^2 a_i} = {} & \sum_{i=1}^{N} \frac{C_i}{z^2 + \frac{1}{a_i}}  \text{,}
\end{align}
where $C_i,i\in \{1,2,...,N\}$ is the constant coefficient related to
$ a_i = \sigma_g^2 \sigma_h^2 |\theta_i|^2/4, i\in\{1,2,...,N\}$.
We can obtain each coefficient $C_i$ by multiple factorization:
\begin{equation} \label{C_i}
C_i = \frac{a_i^{(N-2)}}{\prod_{j=1,j\neq i}^{N} (a_i - a_j)} .
\end{equation}
Substituting (\ref{prod_sum}) and (\ref{C_i}) into (\ref{pdf}), we transform the integration to the sum of a series of Bessel function. 
Therefore, the the PDF of $z$ is given by
\begin{align} \label{pa_without_inte}
p_z(r_z) 
= {} & r_z \int_{0}^{\infty} z \left( \sum_{i=1}^{N} \frac{C_i}{z^2 + \frac{1}{a_i}} \right) J_0(r_zz) \,\mathrm{d} z \notag \\
\stackrel{a} = {} & r_z \sum_{i=1}^{N} C_i \left( \int_{0}^{\infty} \frac{z}{z^2 + \frac{1}{a_i}} J_0(r_zz) \,\mathrm{d} z \right) \notag \\
\stackrel{b} = {} & r_z \sum_{i=1}^{N} C_i K_0 \left( \frac{2 r_z}{\sigma_{h} \sigma_{g} |\theta_i|} \right) \text{,}
\end{align}
where the (\ref{pa_without_inte}b) step follows from Eq. (6.532.4) in \cite{jeffrey2007table}.
\end{proof}

Similarly, we can obtain the PDF of the cascade channel $z' = \boldsymbol{g}_r^T \boldsymbol{\Theta h}_t$ for $U_2$ as follows:
\begin{align} \label{lemma12}
p_z^\prime(r_z^\prime) = {} r_z^\prime \sum_{i=1}^{N} C_i K_0 \left( \frac{2 r_z^\prime}{\sigma_{g} \sigma_{h} |\theta_i|} \right) \text{,}
\end{align}
where $ C_i = \frac{a_i^{(N-2)}}{\prod_{j=1,j\neq i}^{N} (a_i - a_j)} $, $ a_i = \frac{1}{4} \sigma_{g}^2 \sigma_{h}^2 |\theta_i|^2, i\in \{1,...,N\} $.

As a consequence of Lemma 1, for the special case where the IRS model that each element has constant amplitude and continuous phase-shift, 
the joint character function in (\ref{joint function}) is simplified as 
\begin{align} \label{JF_constant_am}
\Psi_{z_R,z_I} (j\omega_1,j\omega_2) = \Biggl( 1 + \frac{1}{4}(\omega_1^2 + \omega_2^2) \sigma_{g}^2 \sigma_{h}^2  \Biggr)^{-N} .
\end{align}
Accordingly, the PDF of cascade channel yields
\begin{equation} \label{PDF_constant_am}
p_z(r_z) = \frac{4r_z^N}{\Gamma(N) \cdot (\sigma_{g} \sigma_{h})^{N+1}  }\cdot K_{N-1} \Bigl( \frac{2r_z}{\sigma_{g} \sigma_{h}} \Bigr)  .
\end{equation}

\begin{theorem}
If the SNR threshold of $U_1$ and $U_2$ are set as $\gamma_{{t}_1}$ and $\gamma_{{t}_2}$, respectively, the outage probability of IRS assisted two-way communication system is given by
\begin{equation}\label{theorem}
P_{\text{out} } = P_{out_1} + P_{out_2} - P_{out_1} P_{out_2},
\end{equation}
where for $j \in \{1,2\}$,
\begin{equation}
P_{out_j} = \sum_{i=1}^{N} \frac{C_i \left(\sigma_{h} \sigma_{g} |\theta_i|\right)^2}{2} \Biggl[ \frac{1}{2} -  \frac{ \sqrt{\gamma^{\prime}_{{t}_j}} K_1\left( \frac{2\sqrt{\gamma^{\prime}_{{t}_j}}}{\sigma_{h} \sigma_{g}|\theta_i|} \right)  }{\sigma_{h} \sigma_{g} |\theta_i|} \Biggr] ,
\end{equation}
\begin{equation}
\gamma^{\prime}_{{t}_j} = \left(  \frac{1}{\rho} + q_j ( \rho \sigma^2_{j} )^{v_j-1} \right)  \gamma_{{t}_j}
\end{equation}
and $ C_i $ and $a_i$ are given in Lemma \ref{lemma1content}.
\end{theorem}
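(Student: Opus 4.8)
The plan is to reduce the theorem to the two marginal outage events and compute each one by integrating the PDF from Lemma~\ref{lemma1content}. First I would expand the defining probability \eqref{TW_OP_defination} using inclusion--exclusion,
\begin{equation*}
P_{\text{out}} = \Pr\{\gamma_1 < \gamma_{t_1}\} + \Pr\{\gamma_2 < \gamma_{t_2}\} - \Pr\{\gamma_1 < \gamma_{t_1}\}\Pr\{\gamma_2 < \gamma_{t_2}\},
\end{equation*}
which is legitimate because $\gamma_1$ depends only on $(\boldsymbol{h}_r,\boldsymbol{g}_t)$ and $\gamma_2$ only on $(\boldsymbol{g}_r,\boldsymbol{h}_t)$, and these four vectors are mutually independent; hence the two outage events are independent and the joint probability factorizes. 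This gives the structure $P_{\text{out}} = P_{out_1} + P_{out_2} - P_{out_1}P_{out_2}$ of \eqref{theorem}, leaving only the evaluation of $P_{out_j} = \Pr\{\gamma_j < \gamma_{t_j}\}$.

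Next I would rewrite the event $\{\gamma_j < \gamma_{t_j}\}$ in terms of the cascade-channel modulus. From \eqref{SINR} and the noise-plus-residual-interference model $\sigma_{m_j}^2 = q_j P_j^{v_j}$, setting $P_j = \rho\,\sigma_j^2$ (so that $\rho$ is the transmit SNR) we have $\gamma_j = |z|^2 P_j / (\sigma_j^2 + q_j P_j^{v_j})$, and the condition $\gamma_j < \gamma_{t_j}$ becomes $|z|^2 < \gamma'_{t_j}$ with the effective threshold $\gamma'_{t_j} = \bigl(\tfrac{1}{\rho} + q_j(\rho\sigma_j^2)^{v_j-1}\bigr)\gamma_{t_j}$ after dividing through by $P_j = \rho\sigma_j^2$. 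Since $|z|$ and $|z'|$ share the same PDF \eqref{lemma1}--\eqref{lemma12}, it then suffices to compute, for threshold $\tau = \sqrt{\gamma'_{t_j}}$,
\begin{equation*}
P_{out_j} = \int_0^{\tau} p_z(r_z)\,\mathrm{d}r_z = \sum_{i=1}^{N} C_i \int_0^{\tau} r_z\, K_0\!\left(\frac{2 r_z}{\sigma_h \sigma_g |\theta_i|}\right)\mathrm{d}r_z.
\end{equation*}

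The remaining work is the finite integral $\int_0^{\tau} r\,K_0(\alpha r)\,\mathrm{d}r$ with $\alpha = 2/(\sigma_h\sigma_g|\theta_i|)$. Using the antiderivative identity $\int x\,K_0(\alpha x)\,\mathrm{d}x = -\tfrac{x}{\alpha}K_1(\alpha x) + C$ together with the small-argument limit $x K_1(\alpha x) \to 1/\alpha$ as $x\to 0$, the integral evaluates to $\tfrac{1}{\alpha^2} - \tfrac{\tau}{\alpha}K_1(\alpha\tau)$; substituting $\alpha = 2/(\sigma_h\sigma_g|\theta_i|)$ and $\tau = \sqrt{\gamma'_{t_j}}$ gives exactly the bracketed expression $\bigl[\tfrac{1}{2} - \sqrt{\gamma'_{t_j}}\,K_1(\cdot)/(\sigma_h\sigma_g|\theta_i|)\bigr]$ scaled by $(\sigma_h\sigma_g|\theta_i|)^2/2$, which is \eqref{theorem}. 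The main obstacle I anticipate is not the Bessel integral itself (a table lookup), but carefully justifying the threshold reparametrization --- tracking the normalization $P_j = \rho\sigma_j^2$ and confirming the lower limit at $r_z = 0$ contributes the constant $1/\alpha^2$ rather than diverging, since $K_1$ has a pole at the origin but $r K_1(\alpha r)$ does not. One should also note that the partial-fraction coefficients $C_i$ pass through the integral unchanged, so no further manipulation of \eqref{prod_sum}--\eqref{C_i} is needed beyond what Lemma~\ref{lemma1content} already supplies.
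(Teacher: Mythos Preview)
Your proposal is correct and follows essentially the same route as the paper: reduce to independent marginal outages via inclusion--exclusion, rewrite each event as a threshold on $|z|^2$ using $\rho = P_j/\sigma_j^2$ and the residual-interference model, and evaluate the resulting Bessel integral. The only cosmetic difference is that the paper first changes variables to $R_z = r_z^2$ before integrating, whereas you integrate $r_z K_0(\alpha r_z)$ directly in $r_z$; both yield the same closed form, and your explicit handling of the $r_z\to 0$ limit via $r K_1(\alpha r)\to 1/\alpha$ is in fact a cleaner justification of the constant term than the paper provides.
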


\begin{proof}
To derive the outage probability of two-way system, the CDF of power of cascade channel $|\boldsymbol{h}_r^T \boldsymbol{\Theta g}_t|^2$ need to be obtained.
Firstly, based on PDF of $r_z$ obtained in Lemma \ref{lemma1content}, we can derive the PDF of $r_z^2$, which is denoted by $R_z$.
The PDF of $R_z$ can be obtained as follows:
\begin{equation}
p_{R_z} (R_z) = \sum_{i=1}^{N} \frac{C_i}{2}  K_0\left( \frac{2 (R_z)^{\frac{1}{2}}}{\sigma_{h} \sigma_{g} |\theta_i|} \right)  .
\end{equation}
Then, the CDF of $R_z$ is expressed as
\begin{align} \label{cdf}
&F_{R_z}(\xi) = {}  \int_{0}^{\xi} p_{R_z}(R_z) \,\mathrm{d} R_z \notag \\
&\stackrel{a} = {}  \int_{0}^{\xi} \sum_{i=1}^{N} \frac{C_i}{2} K_0\left( \frac{2 (R_z)^{\frac{1}{2}}}{\sigma_{h} \sigma_{g} |\theta_i|} \right) \,\mathrm{d} R_z \notag \\
&\stackrel{b}  = \sum_{i=1}^{N}  \frac{C_i \left(\sigma_{h} \sigma_{g} |\theta_i|\right)^2}{2}\left[ \frac{1}{2} -  \frac{\xi^{\frac{1}{2}}}{\sigma_{h} \sigma_{g} |\theta_i|}  K_1\left( \frac{2\xi^{\frac{1}{2}}}{\sigma_{h} \sigma_{g} |\theta_i|} \right) \right] \text{,}
\end{align}
where the equality (b) of (\ref{cdf}) follows from (12) in \cite{ding2020simpleNOMA} and  (6.561.8) in \cite{jeffrey2007table}.

Secondly, we assume that $\gamma^{\prime}_{{t}_j} = \frac{\gamma_{{t}_j} \left(  \sigma^2_{j} + \sigma^2_{m_j}\right) }{ P_j} $,  since $ \gamma_j = \frac{R_z P_j }{\sigma^2_{j} + \sigma^2_{m_j}} $, we have $P_{out_j} = Pr \left( \gamma_j < \gamma_{{t}_j} \right) = Pr \left(  R_z  < \gamma^{\prime}_{{t}_j} \right) $.
According to the transmit signal-to-noise ratio (SNR) of $U_j$, $ \rho_j = P_j / \sigma^2_{j} $ and loop interference residual $ \sigma^2_{m_j} = q_j P_j^{v_j}$, the equivalent SNR threshold of $U_j$ is expressed as $ \gamma^{\prime}_{{t}_j} = \left( \frac{1}{\rho_j} + q_j ( \rho_j \sigma^2_{j} )^{v_j-1}\right)  \gamma_{{t}_j} $. 
Therefore, the outage probability of $U_j$ is obtained as follows:
\begin{equation}\label{single_outage}
P_{out_j} = F_{R_z} \left(  \left(  \frac{1}{\rho_j} + q_j ( \rho_j \sigma^2_{j} )^{v_j-1} \right)  \gamma_{{t}_j} \right) ,
\end{equation}
where $\gamma_{{t}_j}$ is SNR threshold of $U_j$, $j=1,2$.

Finally, the outage probability of two-way system can be shown as
\begin{equation} \label{OP_threshold}
P_{\text{out}} = 1-(1-P_{out_1})(1-P_{out_2}) .
\end{equation}
Substituting (\ref{cdf}) and (\ref{single_outage}) into (\ref{OP_threshold}), the outage probability can be obtained as (\ref{theorem}) in the theorem.
\end{proof}

\begin{corollary}
For the special case that each element has constant amplitude and continuous phase-shift, i.e., $ \theta_i \in \bigl\{\theta_i \Big\arrowvert \theta_i = e^{j \phi_i}, \phi_i \in \left[0,2 \pi \right)  \bigr\} $, the outage probability of two-way system can be approximated as follows:
\begin{equation}\label{lemma2}
P_{\text{out}} \approx \frac{\gamma^{\prime}_{t_1}+\gamma^{\prime}_{t_2}-\gamma^{\prime}_{t_1}\gamma^{\prime}_{t_2}}{ N-1}  ,
\end{equation}
where $\gamma^{\prime}_{t_j}$ is equivalent SNR threshold of $U_j$.
\end{corollary}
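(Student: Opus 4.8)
The plan is to specialize the two-way formula $P_{\text{out}}=1-(1-P_{out_1})(1-P_{out_2})$ of Theorem~1 to the constant-amplitude IRS and then linearize in the (small) normalized thresholds. Note first that one cannot simply set $|\theta_i|=1$ in~(\ref{theorem}): the partial-fraction identity~(\ref{prod_sum}) and the coefficients $C_i$ in~(\ref{C_i}) degenerate when all the $a_i$ coincide (a single pole of multiplicity $N$), so the closed form of Theorem~1 is not directly usable. Instead I would start from the already-simplified PDF~(\ref{PDF_constant_am}), which is exactly the $|\theta_i|=1$ case, and build the CDF of $R_z=r_z^2$ from it.

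First I would change variables in~(\ref{PDF_constant_am}) to obtain $p_{R_z}$ and integrate $F_{R_z}(\xi)=\int_0^{\xi}p_{R_z}(u)\,\mathrm{d}u$. After the substitution $t=\sqrt{u}$ this reduces, up to constants, to $\int_0^{\sqrt{\xi}} t^{N}K_{N-1}\!\big(\tfrac{2t}{\sigma_h\sigma_g}\big)\,\mathrm{d}t$, which is evaluated with the antiderivative $\tfrac{\mathrm{d}}{\mathrm{d}t}\big[t^{N}K_{N}(bt)\big]=-\,b\,t^{N}K_{N-1}(bt)$ together with the small-argument limit $\lim_{t\to0}t^{N}K_{N}(bt)=2^{N-1}\Gamma(N)/b^{N}$. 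This gives the closed form $F_{R_z}(\xi)=1-\frac{2}{\Gamma(N)}\big(\tfrac{\sqrt{\xi}}{\sigma_h\sigma_g}\big)^{N}K_{N}\!\big(\tfrac{2\sqrt{\xi}}{\sigma_h\sigma_g}\big)$, and hence $P_{out_j}=F_{R_z}(\gamma'_{t_j})$ with $\gamma'_{t_j}$ as in Theorem~1.

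The crucial step is the small-argument (equivalently, large-$N$) expansion of the modified Bessel function. Setting $x=2\sqrt{\gamma'_{t_j}}/(\sigma_h\sigma_g)$ and using $K_N(x)=\tfrac12\big[(N-1)!\,(x/2)^{-N}-(N-2)!\,(x/2)^{-(N-2)}+\cdots\big]$, the first two terms give $\frac{2}{\Gamma(N)}(x/2)^{N}K_N(x)=1-\frac{(x/2)^{2}}{N-1}+O\big((x/2)^{4}\big)$, so $P_{out_j}\approx \frac{\gamma'_{t_j}}{(N-1)\sigma_h^{2}\sigma_g^{2}}$; with the channel variances normalized to unity (equivalently, absorbed into $\gamma'_{t_j}$ as elsewhere in the letter) this is $P_{out_j}\approx\gamma'_{t_j}/(N-1)$. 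One should note here that, since $N$ is a positive integer, the expansion of $K_N$ also carries logarithmic terms, but these enter only at order $(x/2)^{2N}$ and so are irrelevant to the leading behavior. Finally, substituting into $P_{\text{out}}=P_{out_1}+P_{out_2}-P_{out_1}P_{out_2}$ and collecting terms yields $P_{\text{out}}\approx(\gamma'_{t_1}+\gamma'_{t_2}-\gamma'_{t_1}\gamma'_{t_2})/(N-1)$, the $P_{out_1}P_{out_2}$ contribution being of higher order in $1/(N-1)$ and in particular no larger than $\gamma'_{t_1}\gamma'_{t_2}/(N-1)$.

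I expect the main obstacle to be pinning down the asymptotic regime and keeping the bookkeeping honest: the corollary is an approximation valid when $\gamma'_{t_j}/(\sigma_h\sigma_g)^2$ is small, equivalently when $N$ is large, so the argument must make explicit which terms are discarded — in particular that the $O((x/2)^{4})$ remainder and the logarithmic part of the $K_N$ expansion are negligible against the retained $\gamma'_{t_j}/(N-1)$, and that the cross term $P_{out_1}P_{out_2}$ is genuinely of higher order. The remaining ingredients (the change of variables, the Bessel integral, and the final substitution) are routine.
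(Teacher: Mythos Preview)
Your proposal is correct and follows essentially the same route as the paper: start from the constant-amplitude PDF~(\ref{PDF_constant_am}), obtain the CDF $F_{R_z}(\xi)=1-\tfrac{2}{\Gamma(N)}\,\xi^{N/2}K_N(2\sqrt{\xi})$ (with $\sigma_h=\sigma_g=1$ as the paper tacitly assumes, citing \cite{ding2020simpleNOMA}), apply the two-term small-argument expansion of $K_N$ to get $P_{out_j}\approx\gamma'_{t_j}/(N-1)$, and then combine via $P_{\text{out}}=P_{out_1}+P_{out_2}-P_{out_1}P_{out_2}$. Your treatment is in fact more careful than the paper's in flagging the degeneracy of the $C_i$ when all $a_i$ coincide and in tracking the negligible logarithmic and cross terms.
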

\begin{proof}
Noting that when $N$ is large and amplitude of phase shift for all elements are constant, i.e. $ |\theta_i|=1$, the variable in (\ref{PDF_constant_am}) obeys the Nakagami-m distribution. The corresponding outage probability of $U_j,j \in \{1,2\}$ degrades into \cite{ding2020simpleNOMA}
\begin{equation} \label{OP_j_lemma2}
P_{\text{out}_j} = 1 - \frac{2\left(\gamma^{\prime}_{{t}_j} \right)^{\frac{N}{2}} }{\Gamma \left( N \right)} K_N \left( 2 \sqrt{ \gamma^{\prime}_{{t}_j} }\right)  .
\end{equation}
Since the Bessel function can be approximated \cite{jeffrey2007table} as follows:
\begin{equation}
K_n(z) \approx \frac{1}{2}\left( \frac{\left(n-1 \right)! }{\left( \frac{z}{2}\right)^n } - \frac{\left(n-2 \right)! }{\left( \frac{z}{2}\right)^{n-2} } \right)  \text{,}
\end{equation}
where $n \geq 2$ and $z\to 0$. Therefore, (\ref{OP_j_lemma2}) can be expressed as
\begin{equation}\label{one_out}
P_{\text{out}_j} \approx \frac{\gamma^{\prime}_{t_j}}{ N-1 }  .
\end{equation}
Combining (\ref{one_out}) and (\ref{TW_OP_defination}), the outage probability of the system is obtained in (\ref{lemma2}).
\end{proof}

Based on the above derivation, we conclude that with arbitrarily given $\boldsymbol{\Theta}$, the outage probability does not depend on the phase shift of each elements, but only on the amplitude of them.
The reason is that $\boldsymbol{\Theta}$ is arbitrarily given and independent on the channel coefficients.
Besides, the outage probability is a statistical measure of system reliability.
The conclusion is encouraging because it indicates that with only statistical CSI information, we can improve the reliability of the network by only adjusting the modulus of the reflection elements. 

\section{Simulation Results}
In this section, simulation results are presented to demonstrate the performance of IRS assisted full duplex two-way communication systems with $\sigma_{ht}^2=\sigma_{hr}^2=1$, $\sigma_{gt}^2=\sigma_{gr}^2=1$ and channel noise $\sigma_{1}^2=\sigma_{2}^2=1$. The variable amplitudes of each elements are set to $|\theta_i| = i/N, i \in \{ 1,2,...,N \}$ and the phase shifts are random. 

Fig. 2 plots the outage probability as a function of $\rho$ (we assume that  $\rho_1=\rho_2=\rho$) when loop interference $ \sigma^2_{m_{1}} = \sigma^2_{m_{2}} = \omega=10^{-4}$, i.e., $\sigma^2_{m_{1}}=\omega_{i}P_i^{v_i}$, where $v_i=0, \omega_{i}=10^{-4}$. 
It confirms the accuracy of the developed analytical results of CDF shown in Theorem 1. 
The presented illustrations include Monte Carlo simulating results with $10^6$ independent channel realizations for the outage probability.  
With the increasing of reflection elements quantity $N$, the outage probability decreases significantly.
\begin{figure}[htbp]
	\vspace{-20pt}
	\centerline{\includegraphics[width=8 cm]{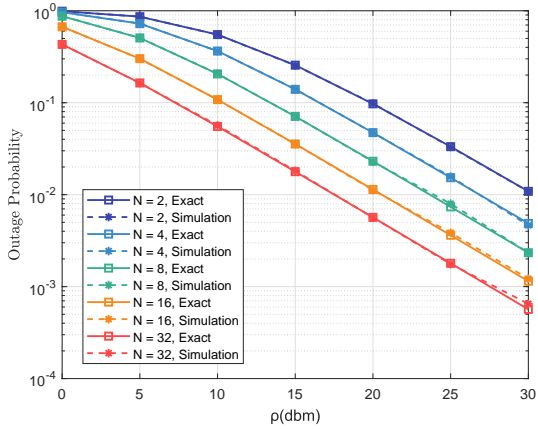}}                                                                                                                                                                                                                                                                                                                                                                                                                                                                                                                                                                                                                                                                                                                                                                                                                                                                                                                                                                                                                                                                                                                                                                                                                                                                                                                                                                                                                                                                                                                                                                                                                                                                                                                                                                                                                                                                                                                                                                                                                                                                                                                                                                                                                                                                                                                                                                                                                                                                                                                                                                                                                                                                                                                                                                                                                                                                                                                                                                                                                                                                                                                                                                                                                                                                                                                                                                                                                                                                                                                                                                                                                                                                                                                                                                                                                                                                                                                                                                                                                                                                                                                                                                                                                                                                                                                                                                                                                                                                                                                                                                                                                                                                                                                                                                                                                                                                                                                                                                                                                                                                                                                                                                                                                                                                                                                                                                                                                                                                                                                                                                                                                                                                                                                                                                                                                                                                                                                                                                                                                                                                                                                                                                                                                                                                                                                                                                                                                                                                                                                                                                                                                                                                                                                                                                                                                                                                                                                                                                                                                                                                                                                                                                                                                                                                                                                                                                                                                                                                                                                                                                                                                                                                                                                                                                                                                                                                                                                                                                                                                                                                                                                         
	\caption*{Fig. 2. Monte Carlo Simulation and Exact Results of Outage Probability.}
	\label{fig2}
	\vspace{-12pt}
\end{figure}

\begin{figure}[htbp]
	\vspace{-5pt}
	\centerline{\includegraphics[width=8 cm]{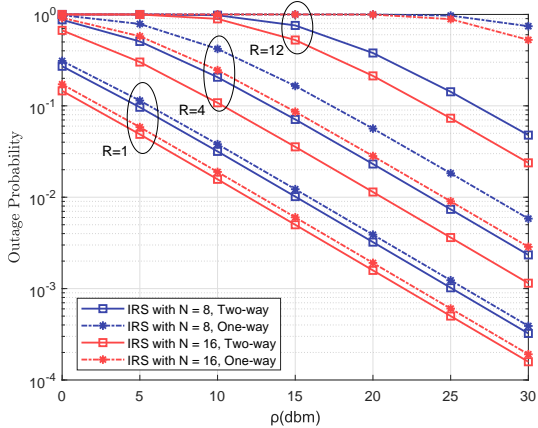}}
	\caption*{Fig. 3. Outage Probability of Two-way and One-way Communications.}
	\label{fig3}
	\vspace{-13 pt}
\end{figure}

Fig. 3 compares the outage performance of two-way and one-way communications, where the two-way target rate are 1,8,16 bits per channel use respectively and the loop interference $ \sigma^2_{m_{1}} = \sigma^2_{m_{2}} = \omega=10^{-4}$. 
We can observe that the outage performance for two-way communication is always better than that for one-way communication, which indicates the IRS assisted two-way scheme is more reliable than one-way scheme at the same effectiveness. 

Fig. 4 compares the outage probability for IRS of constant amplitude ($ |
\theta_i | = 1, i \in \{ 1,...,N \} $) and variable amplitude ($|\theta_i| = i/N, i \in \{ 1,...,N \}$) with different loop interference residual,  $\sigma^2_{m_{j}}=\omega_{j}$ and $\sigma^2_{m_{j}}=\omega_{j}P_j$, i.e., $v_j=0$ and $v_j=1$,  respectively, where $j=1,2$. It shows that outage probability for constant amplitude elements is lower than that with variable amplitude.   
Furthermore, the increase of the reflection element $N$ has a conspicuous effect on the improvement of reliability such that it is feasible to compensate the loss of loop interference residual by increasing $N$.
\begin{figure}[htbp]
	\vspace{-20pt}
	\centerline{\includegraphics[width=8.5 cm]{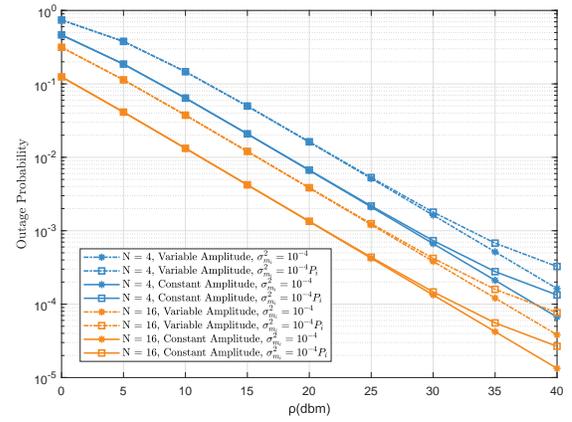}}
	\caption*{Fig. 4. Outage Probability of Two-way Communications with Constant Amplitude and Variable Amplitude.}
	\label{fig4}
\end{figure}

\section{Conclusion}
In this letter, we analyzed the PDF and CDF of cascade channel in IRS-assisted two-way communication systems with an arbitrarily given IRS beamformer.
The outage probability is derived in closed form based on reflecting elements assumptions with continuous amplitude and phase-shift.
The numerical results shows that the outage probability reduces significantly as the number of elements increases.
Based on the outage probability derived in this letter, the joint optimization of outage probability and other performance metrics such as sum-rate of two-way systems can to be further investigated.

\bibliographystyle{IEEEtran}
\bibliography{reference}

\end{document}